\newtheorem{theorem}{Lemma}
\begin{document}

\supertitle{IET Quantum Communication}

\title{A Novel Quantum Algorithm for Ant Colony Optimization}

\author{\au{Mrityunjay Ghosh$^{1,3\corr}$}, \au{Nivedita Dey$^{2,3}$}, \au{Debdeep Mitra$^{2,3}$}, \au{Amlan Chakrabarti$^{3}$}}

\address{
\add{1}{HCL Technologies, India}
\add{2}{QRDLab, India}
\add{3}{University of Calcutta, India}
\email{g.mrityunjay@gmail.com}}

\begin{abstract}
Ant colony optimization (ACO) is a commonly used meta-heuristic to solve complex combinational optimization problems like traveling salesman problem (TSP), vehicle routing problem (VRP), etc. However, classical ACO algorithms provide better optimal solutions but do not reduce computation time overhead to a significant extent. Algorithmic speed-up can be achieved by using parallelism offered by quantum computing. Existing quantum algorithms to solve ACO are either quantum-inspired classical algorithms or hybrid quantum-classical algorithms. Since all these algorithms need the intervention of classical computing, leveraging the true potential of quantum computing on real quantum hardware remains a challenge. This paper's main contribution is to propose a fully quantum algorithm to solve ACO, enhancing the quantum information processing toolbox in the fault-tolerant quantum computing (FTQC) era. We have Solved the Single Source Single Destination (SSSD) shortest-path problem using our proposed adaptive quantum circuit for representing dynamic pheromone updating strategy in real IBMQ devices. Our quantum ACO technique can be further used as a quantum ORACLE to solve complex optimization problems in a fully quantum setup with significant speed up upon the availability of more qubits.
\end{abstract}

\keywords{Quantum Computing, Quantum Algorithm, Ant Colony Optimization, QACO, Quantum Ant, Quantum Circuit Synthesis}

\maketitle

\section{Introduction}

Artificial Swarm Intelligence (ASI) helps amplify group intelligence of a formed network to enable mere accurate forecasts, insights, evaluations, and assessments. \cite{holland} An ant colony is a natural system that, as a whole, is capable of engaging in complex behaviors like building nests, foraging for food, raising aphid livestock, waging war with other colonies, etc. The concept of Ant Colony Optimization (ACO) was first proposed by an Italian scholar, M.Dorigo, in 1991. The foraging behavior of real-world ant colonies can be mapped as a meta-heuristic for solving discrete combinatorial optimization problems. \cite{dorigo} \cite{colorni} Real ants release a chemical on their path from nest to food called pheromone, which is used by following ants to find the shortest path to food source via the pheromone trail. Ant colony metaphor can solve a diverse range of optimization problems like Travelling Salesman Problem (TSP), \cite{tsp} Vehicle Routing problem, \cite{dants} Assignment problem, \cite{demiral} Job-shop problem, \cite{shopscheduling} 0-1 Knapsack \cite{zhao} and many more. ACO has proven itself to be a promising one by exhibiting tremendous growth in solving discrete optimization problems. \cite{blum} But, synthesis of an optimal reversible circuit for a given optimization function is an NP-Hard problem. Quantum computing principles like superposition, entanglement, and quantum tunneling can bring a paradigm shift in computing by achieving substantial speed-up over its classical counterpart.\cite{hoos}\cite{kallel} Quantum computing offers reversible computational logic, which helps in the minimization of power consumption. Reversible quantum circuits synthesized with basic quantum gates like NOT, CNOT, TOFFOLI do not lose computational information and thus can be a good alternative for reducing overall computational complexity and resource overhead. Additionally, quantum superposition using Hadamard gate will allow parallel exploitation of the search space and thus may offer the desired solution faster than classical ACO synthesis through quantum interference. \cite{grover}

Evolutionary computation (EC) is a class of heuristic optimization techniques inspired by biological evolution to solve challenging hard problems. In EC, solution space is a population of candidate solutions that compete against each other to gradually increase the population's fitness from one generation to the next using operators like mutation, crossover, cloning, and intelligent selection strategies. After the evolution process, highly fit individuals will emerge as optional solutions in the solution space. A quantum-inspired evolutionary algorithm was the first genre of solutions to connect quantum computing with evolutionary computing, which was designed to use quantum logic to inspire the creation of better optimization algorithms that can be run on classical computers. \cite{narayanan} Next attempt in this space was to incorporate a class of evolutionary computing known as Genetic computing for the evolution of new quantum algorithms. These quantum-inspired algorithms do not explain how classical representations and operators can be implemented using quantum gates, how the entanglement of qubits can be manifested, or how quantum interference can yield classical measurement of a globally optimal solution. Moreover, quantum genetic algorithms (QGA) \cite{hunkim} \cite{rylander} made use of quantum gates for rotation to implement evolution function but cannot use mutation or crossover operator. This is because any prior observation or measurement of quantum states would destroy the superposition. Moreover, the no-cloning theorem of quantum mechanics prevents the design of quantum mutation operators. Quantum evolutionary algorithms (QEAs) proposed in \cite{wang-qa}, \cite{wang2007} adopted quantum crossover operation proposed by  Narayanan and Moore \cite{narayanan} and provided equations to guide chromosome update classically. These provide clear evidence of the shortcomings of quantum-inspired evolutionary techniques for implementing on real quantum hardware.

Near-term applications on a quantum computer are primarily based on a hybrid quantum-classical variational approach. These approaches work based upon the parameterization of quantum states applied on a relatively small parameter set. Classical optimization modules determine parameter values based on optimization of the utility function, which is nothing but a Hamiltonian encoding of the total energy of the underlying system. But these hybrid variational approaches to solve classical non-linear optimization problems are subjected to several performance issues, including hardness of finding Hamiltonians with many more Pauli terms, difficulty in avoiding multiple local optima, etc. Exploiting quantum mechanical principles will require a full-scale quantum computer with noiseless and sufficient qubits for computation. Whenever these fault-tolerant quantum computers (FTQC) will be available, the logical error rate associated with qubits will be suppressed to arbitrarily low levels. Then quantum computers will perform long computations with sufficient noise-tolerant qubits to produce better outcome precision. Our proposed QACO algorithm is designed for the FTQC era, which can be added in the FTQC toolbox as a fully quantum module to solve the Single Source Single Destination (SSSD) problem using principles of the basic ant colony system.

We have attempted to introduce quantum programming techniques like amplitude amplification and modified Grover ORACLE operation to provide complex quantum subroutines which can be fitted for an increasing variety of specialized data structures. A real challenge in developing a full-scale quantum algorithm was to generate fitness values using a quantum ORACLE to mimic the behavior of evolutionary algorithms without any classical intervention. Transformation of superposed quantum states representing dynamic update of pheromone using circuit model of computation was one of the most significant computationally challenging parts. Our proposed approach shows how individuals encoded in superposed quantum states can maintain a correlation between the process of initial state preparation and quantum pheromone box transformation. Our proposed method does not show any computational advantage over the classical counterpart with the same linear-time complexity. But, if our proposed QACO module can be integrated as an ORACLE to solve NP-Hard problems, it can surely give new direction for future quantum computation in exploiting quantum computing supremacy.

\begin{itemize}
\item \textbf{Ant Colony Optimization Algorithm:}
The idea invented so far regarding Quantum Ant Colony Optimization (QACO) is based on QEA, where Q-bit and Quantum Rotation strategy are used to represent and update the pheromone respectively in discrete binary combinatorial optimization domain. In this paper, we have presented a novel quantum ant colony optimization algorithm based on hybrid implementation of Grover\textquotesingle{s} amplitude amplification technique \cite{grover} and time driven quantum evolution to encode the possible paths by updating the distribution of pheromone in terms of probability density function. The amount of pheromone deposited on the shortest path will eventually lead to global convergence of the optimization problem in the solution space. The implementation and simulation results exhibit better efficiency, improved computation speed and enhanced optimization capability of ACO making it a robust solution. To the best of our knowledge, the QACO proposed in this paper is the first attempt towards a full scale quantum technique where all the basic operations of Ant Colony Optimization like path exploration, pheromone deposition and pheromone evaporation are performed in quantum registers.
\end{itemize}
The rest of this paper starts with literature review section as Section 2. Section 3 covers the basics of combinatorial optimization (CO) problems and classical ACO to solve CO problems along with its variations, section 4 illustrates quantum gates and circuits as prerequisites in our proposed MNDAS (Mrityunjay-Nivedita-Debdeep-Amlan-Subhansu) algorithm. Section 5 sheds light on conventional quantum-inspired evolutionary technique for ACO, section 6 elucidates our novel QACO algorithm along with its problem initialization, path exploration and pheromone updation modules. Section 7 provides a mapping between the proposed algorithm and QACO problem in terms of realizing the approach on an ant colony example and presents simulation result showing global path convergence and time complexity analysis. The last section illustrates conclusion and scope for future research growth on related areas. \cite{kallel}

\section{Literature Review} ACO has a diverse range of applications and thus has always been a fundamental topic of theoretical interest. Since our work aims to propose a novel 'quantum algorithm' for ACO, we will mention fewer insights into working principles of classical ACO through their evolution and highlight the notion of quantum ant colony optimization based Quantum-inspired Evolutionary Algorithms (QEAs).
During evolution starting from classical ACO metaheuristic by M. Dorigo, the first rigorous theoretical investigations on ACO was proposed by Neumann and Witt in 2006, where authors had presented a $1$-ANT algorithm. \cite{newmannwitt} The algorithm operates by constructing a new solution and performing pheromone update only if the current solution is better than the best solution obtained so far. Pheromone update in $1$-ANT algorithm is controlled by evaporation factor ($\rho$) [$0<\rho<1$]. The larger value of $\rho$ is associated with increasing the impact factor of the current solution over the previous best-obtained solution. Later, investigations on $1$-ANT ACO have shown the performance degradation of the algorithm for a minimal value chosen for evaporation factor, as the expected time to achieve an optimal solution is exponential. \cite{newmannwitt} \cite{doerr} Further work on classical ACO was made by reinforcing the best solution obtained so far in each iteration using the best-so-far update strategy. This concept was first coined in MAX-MIN Ant System (MMAS) algorithm. \cite{doerr} \cite{gutjahr} \cite{mmhoos} Another algorithmic advancement was made in this direction by reinforcing the best solution created in the current iteration. This is known as an iteration-best update that works well even with a small value of evaporation factor. \cite{sudholt-ising}
Transformation of a discrete optimization problem as a `best path' problem is the intuition behind the origination of quantum ant colony optimization, which is inspired by Quantum-inspired Evolutionary Algorithms (QEAs) theories. Unlike classical ACO, QEA based ACO is formulated with the help of Q-bit and quantum rotation gate. As the pheromone update strategy is dissimilar to the update strategy of QEA, Ling Wang et al. first proposed a rotation angle updating strategy to update the pheromone trails over the existing pre-determined updating strategy in 2007. \cite{wang2007}  Their proposed work significantly taken into consideration the exploitation probability, as optimizing exploitation probability provides a trade-off between earlier convergence of ACO and effective escape from local optima. In 2008, Ling Wang et al. extended their algorithm to solve fault detection in the chemical production process. \cite{wang-qa} The authors had combined Support Vector Machine (SVM) with their proposed QACO to select fault features. In 2010, Panchi Li et al. proposed a continuous quantum ant colony optimization algorithm. They have made each ant encode with a group of qubits to represent its own position.\cite{li-qa} The algorithm begins with selecting the local best path based on pheromone information as a heuristic followed by updating each of the ant\textquotesingle{s} own qubits with the help of quantum rotation gate. In order to enable mutation and improve the diversity of positions, some qubits have undergone modifications by quantum non-gates. The idea of continuous-time evolution was implemented by adding the fitness function value of the current ant position of the pheromone to update the heuristic information. The optimum position thus can hold the more excellent fitness function value and fitness function gradient value ensuring accelerated, guaranteed convergence. Some other work in the domain of QACO was that of evacuation path optimization algorithm proposed by Min Liu et al. in 2016. \cite{evacuationpath} In this paper, the advantage lies upon the scalability of the method as it is suitable for multiple source nodes to multiple destination nodes instead of a single path between two locations.

\section{Classical Ant Colony Optimization for solving combinatorial optimization problems}
The objective of a combinatorial optimization problem, associated with a set of problem instances, is to maximize or minimize several parameters. \cite{combinatorialoptimization} The solutions of these intractable optimization problems incur exhaustive or brute-force search, which is computationally hard. \cite{blum} Ant Colony Optimization (ACO) is a meta-heuristic process to solve computationally hard optimization problems, in which the idea is to allocate the computational resources to a set of relatively simple agents called artificial ants. Since, in our paper we have focused on proposing quantum algorithm for simple ACO, where path searching behavior of ants and pheromone updation rules are discussed, we will have a brief introduction to simple ACO meta-heuristic.
Simple ACO functions in two operating modes: forward (from nest towards the food) and backward (from food back to the nest). Forward ants build a solution by probabilistically choosing the next node to move to among those in the adjacent positions with respect to current node. This probabilistic choice is biased by pheromone trails previously deposited on the paths by other ants. Forward ants do not deposit any pheromone, which when associated with deterministic backward moves, helps to eliminate loop formation.
\subsection{Ants' path searching behavior}
Each ant builds a solution to the problem, where step-by-step decision making of each ant relies on the local information stored on the node itself or on the outgoing arcs. Search process begins with assignment of constant amount of pheromone $(\tau_{ij}=1)$ to all arcs. \cite{dorigo} \cite{colorni} An ant $k$ locating on node $i$ computes the probability to choose $j$ as the next node using pheromone trail $\tau_{ij}$ as follows:
\begin{equation}
{p_{ij}}^{k}=
\begin{cases}
      \frac{\tau_{ij}^{\alpha}}{\sum\limits_{j \in N_{i}^{k}}\tau_{ij}^{\alpha}}, & \text{if $j \in N_{i}^{k}$}\\
      0, & \text{if $j \notin N_{i}^{k}$}\\
    \end{cases}
\end{equation}
$N_{i}^{k}$\:=\:Neighbourhood of ant $k$ when located on node $i$ (excluding predecessor of node $i$)
\\\\
In ACO meta-heuristic, a problem-specific heuristic is taken into account for decision making. \cite{dorigo1} \cite{dorigo2}
\begin{equation}
f(j)=
\begin{cases}
	arg\{\begin{matrix}max\\d=feasible_{k}(t)\end{matrix}[\tau_{id}(t)^{\alpha}.\eta_{id}^{\beta}]\:\} & \text{when $r \leq r_{0}$}\\
	j^{'} & \text{when $r>r_{0}$}
\end{cases}
\end{equation}
$f(j)=\:$ Constraint for transition function to move from node $i$ to $j$\\\\
$\tau_{id}(t)=\:$ Pheromone trail at time $t$\\
$\eta_{id}=\:$ problem specific heuristic information\\
$\alpha=\:$impact of heuristic information\\
$r=\:$random number with uniform distribution in $[0,1]$\\
$r_{0}=\:$pre-specified parameter ranging from $0$ to $1$, inclusive\\\\
$feasible_{k}(t)=\:$set of feasible nodes excluding already visited (predecessor) nodes by $k$-th ant before visiting node $i$, to prevent loop formation\\\\
$j^{'}=\:$target point selected according to the following probability distribution.\\
\begin{equation}
P_{ij}^k(t)=
\begin{cases}
	\frac{[\tau_{ij}(t)]^{\alpha} . \eta_{ij}^{\beta}}{\sum\limits_{d \in feasible_{k}(t)}[\tau_{ij}(t)]^{\alpha} . \eta_{ij}^{\beta}}, & \text{if $j \in feasible_k(t)$}\\
	0, & \text{otherwise}
\end{cases}
\end{equation}
\subsection{Deterministic backward pheromone trail update}
Retracing step by step by the same path in backward mode first begins with a scanning process for formed loop elimination followed by deposition of $\Delta\tau^{k}$ amount of pheromone by each ant. An ant $k$ traversing in backward mode through the arc $(i,j)$, will update the pheromone value as follows:
\begin{equation}
	\tau_{ij}^{'}=\tau_{ij}+{\Delta\tau}^{k}
\end{equation}
This pheromone updation step ensures chances of forthcoming ants to trace the same path.
\subsection{Pheromone trail evaporation}
Pheromone trail evaporation is an exploration mechanism to avoid quick convergence of all ants towards a local best solution, equivalently a suboptimal path. Process of decreasing the intensities of pheromone trails favours exploration of different paths during whole search space. The evaporation of pheromone trails can be expressed as follows:
\begin{equation}
\tau_{ij}^{*}\leftarrow(1-\rho).{\tau_{ij}}\ \ \ \ \forall \: (i,j)\in E(G)
\end{equation}
$E(G)$ represents the set of all arcs of the graph, $\rho \in (0,1]$ is a parameter, ${\tau_{ij}}^{*}$ is the updated pheromone level after evaporation.\cite{dorigo} \cite{colorni} \cite{dorigo3}
\subsection{Daemon-actions}
The pheromone evaporation process is interleaved with the process of pheromone deposition with $\Delta\tau^{k}$ amount to all arcs. But at times, activating a local optimization procedure to implement centralized actions, is an important factor to decide utility of depositing additional pheromone to bias the search process from a non-local perspective. \cite{colorni} \cite{dorigo} \cite{evacuationpath}\\
Evolution in simple ACO has taken place with difference in updation policy of pheromone deposition and evaporation. Ant-System(AS) update rule of simple ACO is replaced by Iteration-Best (IB) update rule in practice. IB update rule introduces emergence into a system while taking biasedness towards good solutions into consideration obtained through previous iterations. Another variant is Best-so-far (BS) update rule, which exhibits biasedness towards the best solution available so far. \cite{blumdorigo}\cite{blumdorigo1} Both of these policies suffer from earlier convergence as the set of all sequences of solution components that might be constructed by ACO algorithm to produce feasible solutions are updated with the set of all sequences of solution components obtained through multiple solutions of previous iterations in case of IB update and best solution obtained so far in case of BS update. In order to avoid premature convergence, advanced ACO algorithms like Ant Colony System (ACS) and MAX\_MIN Ant System (MNAS) are used.
\section{Quantum Gates and Circuits} The smallest unit of information in quantum is represented as quantum bits or qubits. A qubit is thought to exist as a superposition of two pure states $0$ and $1$. A state of a superposed qubit can be expressed as follows:
\begin{equation}
	\ket{\psi}=\alpha\ket{0}+\beta\ket{1}
\end{equation} Here, $\alpha$ and $\beta$ are complex numbers with $|\alpha|^{2}$ and $|\beta|^{2}$ representing probabilistic amplitudes of the superposed qubit to be in $\ket{0}$ and $\ket{1}$ respectively. \cite{lovett}
Unlike classical computing with irreversible gate logic, quantum computing performs unitary evolution of quantum states and hence relies on reversible logic of quantum gates. \cite{lovett} A single qubit gate is a kind of operator that acts on only one qubit at a time. These operators are described by $2 \times 2$ unitary matrices, where unitary matrix $U$ has a property $U^\dagger.U=U.U^\dagger=I$.\cite{lovett} \cite{grover} Such an operator, called Hadamard operator or Hadamard gate maps the basic state $\ket{0}$ to $\frac{\ket{0}+\ket{1}}{\sqrt{2}}$ and $\ket{1}$ to $\frac{\ket{0}-\ket{1}}{\sqrt{2}}$. Hadamard gate is an 1-qubit version of QFT (Quantum Fourier Transform). \cite{lovett}
\begin{multline}
H=\frac{1}{\sqrt{2}}\begin{pmatrix}1 & 1 \\ 1 & -1\\ \end{pmatrix},\\ H\ket{0}=\frac{1}{\sqrt{2}}\begin{pmatrix}1 & 1 \\ 1 & -1\\ \end{pmatrix}\begin{pmatrix}1 \\ 0 \end{pmatrix} =\frac{1}{\sqrt{2}}\begin{pmatrix}1 \\ 1 \end{pmatrix}=\frac{1}{\sqrt{2}}(\ket{0}+\ket{1})
\end{multline}
A quantum gate can act on $N$ qubits simultaneously. Similar to the case of single qubit, the probability must be conserved when operating in multiple dimensions, and the operators are hence unitary. The simplest example is the well-known two qubit Controlled $NOT$ ($CNOT$) gate or Fenyman gate. Matrices are defined in the basis spanned by the two qubit state vectors $\ket{00} \equiv [1000]^T$, $\ket{01} \equiv [0100]^T$, $\ket{10} \equiv [0010]^T$, $\ket{11} \equiv [0001]^T$, where the first qubit is the control qubit and the second qubit is the target qubit. The $CNOT$ gate flips the state of the target qubit conditioned on the control qubit being in state $\ket{1}$. The action of the $CNOT$ gate is given as $\ket{x}\ket{y} \to \ket{x}\ket{y \oplus x}$. \cite{lovett}
\begin{center}
	$CNOT$ = $\begin{pmatrix}
		1 & 0 & 0 & 0 \\
		0 & 1 & 0 & 0 \\
		0 & 0 & 0 & 1 \\
		0 & 0 & 1 & 0 \\
	\end{pmatrix}$
\end{center}
Controlled-$NOT$ gate can be extended to Controlled-Controlled-$NOT$ ($C^2NOT$) gate, alternatively known as Toffoli gate. It is universal reversible logic acting as a quantum operator with three input bits. If first two bits are set to 1, it inverts the third bit. \cite{lovett}
Controlled phase ($CPHASE$) gate applies a $Z$-gate to the target qubit conditioned on the control qubit being in state $\ket{1}$. $Z$ gate performs a $\pi$- rotation around the $Z$-axis. $Z$-gate is also referred as phase flip. \cite{lovett} 
\begin{center}
	$CPHASE$ = $\begin{pmatrix}
		1 & 0 & 0 & 0 \\
		0 & 1 & 0 & 0 \\
		0 & 0 & 1 & 0 \\
		0 & 0 & 0 & e^{i\theta} \\
	\end{pmatrix}$
\end{center}
A Controlled-Controlled $NOT$ or ($C^2NOT$) gate gives the AND of two control qubits $C_1$ and $C_2$. Chaining more than two Toffoli\textquotesingle{s} together through AND operation among multiple control qubits $c_1, c_2, ..., c_n (c_1.c_2.c_3...c_n)$ and introducing few ancilla qubits to store intermediate results, $C^nNOT$ gate can be implemented. In figure \ref{fig:tofolli4}, a 4 qubit Toffoli ($C^4NOT$) as MCT has been implemented with four control qubits $\ket{c_1}$, $\ket{c_2}$, $\ket{c_3}$, $\ket{c_4}$ and three ancilla qubits and one target qubit which will flip only when $c_1.c_2.c_3.c_4 = 1$. After applying the final $C^n(X)$, a reversible model of computation has been implemented (compute-copy-uncompute) to clean up intermediate work qubits by undoing their computation and resulting ancilla qubits to $\ket{0}$ state. \cite{gupta}
\begin{figure}
  \includegraphics[width=8cm , height=5cm]{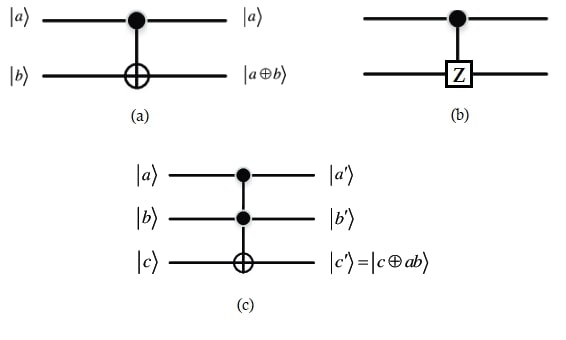}
  \caption {Basic quantum gates. (a) Controlled $NOT$ gate (b) Controlled $PHASE$ gate (c) Controlled-Controlled $NOT$ gate}
  \label{fig:basicgates}
\end{figure}
\begin{figure}
  \includegraphics[width=8cm , height=4cm]{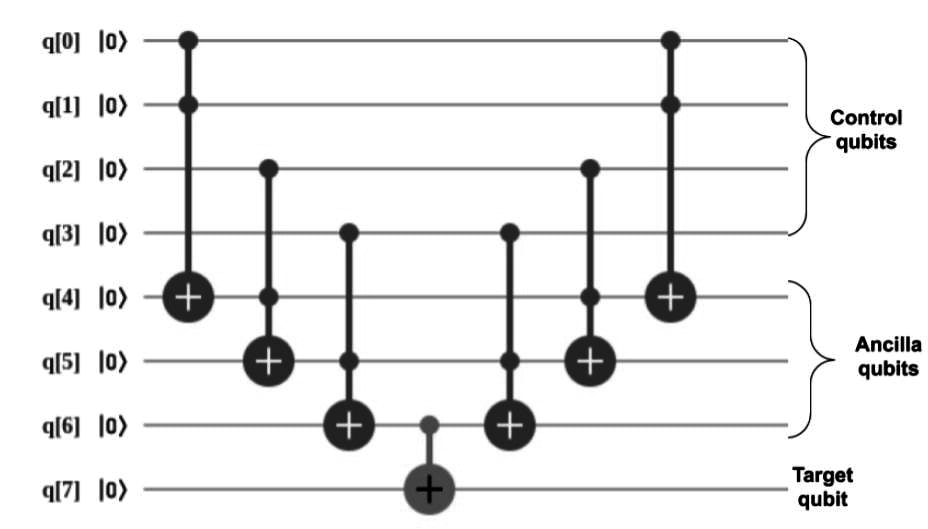}
  \caption{Decomposition of a 4 qubit multi controlled tofolli gate in $CNOT$ and $C^2NOT$ gates}
  \label{fig:tofolli4}
\end{figure}
Measuring a quantum state causes disturbance in quantum mechanical system by resulting in degeneration of superposed quantum state and its convergence into classical state. A collection of measurement operators $\{M_K\}$ where $K$ is a given measurement outcome, is not necessarily unitary. Operators $\{M_K\}$ acting on Hilbert space of the given state satisfy completeness equation, $\sum_KM_K^+M_K=I$. For a quantum state $\phi$, the probability of obtaining the measurement outcome $m$ is $P(m)=\braket{\phi|M_m^+M_m|\phi}$ and the resulting quantum state is $(\braket{\phi|M_m^+M_m|\phi})^{-\frac{1}{2}}*M_m\ket{\phi}$. Completeness equation encodes the fact that measurement probabilities over all the outcomes sum to unity.
A quantum ORACLE is black box representation of a quantum circuit which acts as subroutine of a quantum algorithm. Input to the ORACLE is a boolean function $f$, such that $f: \{0,1\}^n \to \{0,1\}^n$. Function $f$ is said to be queried via an ORACLE $O_f$ where, $\ket{x}\ket{q} \to \ket{x}\ket{q \oplus f(x)}$, $\vdash x \in \{0,1\}^n$ and $q \in \{0,1\}^m$. The above mapping can be implemented by an UNITARY circuit $U_f$ of the form:
\begin{equation}
U_f=\sum_{x \in \{0,1\}^n}\sum_{q \in \{0,1\}^m}\ket{x}\braket{x|X|q \oplus f(x)}\bra{q}
\end{equation}
Effect of ORACLE needs to be determined in all basis states.
\section{Existing notion of QACO and their lack of universality}
In ACO, complexity of exploring the possible paths from food source to nest and exploiting the whole search space increase significantly with the increase in number of paths and number of iterations to achieve better performance due to sequential mode of execution of the algorithms. Quantum parallelization and quantum state entanglement can substantially reduce the algorithmic complexity involved in exploration of large solution space of optimization problems. A fault-tolerant quantum computer with $50$ error corrected qubits can encode $2^{50}$ number of paths simultaneously, which takes $2^{50}$ number of bits in its classical counterpart.
\subsection{Quantum-inspired Evolutionary Algorithm}
Quantum-inspired Evolutionary Algorithms (QEA) fed by probabilistic mechanism of quantum computation, have been applied in the existing research of QACO problems. The smallest information unit in QEA is Q-bit, defined as $[\alpha,\beta]^{T}$. \cite{wang-qa} \cite{wang2007} $\alpha$ and $\beta$ represent complex numbers to satisfy the normalization condition $|\alpha|^{2}+|\beta|^{2}=1$. By a process of probabilistic observation, each Q-bit can be rendered into one binary bit. A Q-bit representation, employing a Q-bit to describe a probabilistic linear superposition can be extended to a multi Q-bit system as shown in equation \ref{multiQ-bit}.
\begin{equation}
Q=\begin{bmatrix}
\begin{matrix}
\alpha_{1} \\ \beta_{1}
\end{matrix}
& | &
\begin{matrix}
\alpha_{2} \\ \beta_{2}
\end{matrix}
& | \cdots & | &
\begin{matrix}
\alpha_{m} \\ \beta_{m}
\end{matrix}
\end{bmatrix}
\label{multiQ-bit}
\end{equation}
\begin{equation}
Q=\begin{bmatrix}
\begin{matrix}
\frac{-\sqrt{3}}{3} \\ \frac{\sqrt{6}}{3}
\end{matrix}
& | &
\begin{matrix}
\frac{\sqrt{2}}{3} \\ \frac{\sqrt{7}}{3}
\end{matrix}
& | &
\begin{matrix}
\frac{-\sqrt{5}}{3} \\ \frac{-2}{3}
\end{matrix}
\end{bmatrix}
\label{multiQ-bit-ex}
\end{equation}
The above example in equation \ref{multiQ-bit-ex} represents a linear probabilistic superposition of $2^{3}=8$ states as $\ket{000}$, $\ket{001}$, $\ket{010}$, $\ket{011}$, $\ket{100}$, $\ket{101}$, $\ket{110}$ and $\ket{111}$, where its superposed state can be described as:
\begin{multline}
\ket{\psi} = \frac{\sqrt{30}}{27}\ket{000}+\frac{\sqrt{24}}{27}\ket{001}-\frac{\sqrt{105}}{27}\ket{010}+\frac{\sqrt{84}}{27}\ket{011} \\ +\frac{\sqrt{60}}{27}\ket{100}-\frac{\sqrt{48}}{27}\ket{101}+\frac{\sqrt{210}}{27}\ket{110}-\frac{\sqrt{168}}{27}\ket{111}
\end{multline}
A conventional binary solution is constructed through Q-bit observation, where for a bit $r_{i}$ of a binary individual $r$, a chosen value of random number $\eta \in [0,1]$ is compared with $\alpha_{i}$ of Q-bit individual $P$. \cite{wang-qa} The binary encoding process is as follows:
\begin{equation}
    \begin{cases}
      r_{i}=0, & \text{if $|\alpha_i|^{2}>\eta$}\\
      r_{i}=1, & \text{if $|\alpha_i|^{2}\leq\eta$}\\
    \end{cases}
\end{equation}
Generation step is followed by fitness evaluation step and its outcome is then processed through a quantum rotation gate $R(\theta)$ operating as follows:
\begin{equation}
\begin{bmatrix}
\alpha_{i} \\ \beta_{i}
\end{bmatrix}^{'}
= R(\theta_{i})\begin{bmatrix}
\alpha_{i} \\ \beta_{i}
\end{bmatrix}
=
\begin{bmatrix}
cos \theta_{i} & -sin \theta_{i} \\ 
sin \theta_{i} & cos \theta_{i}
\end{bmatrix}
\begin{bmatrix}
\alpha_{i} \\ \beta_{i}
\end{bmatrix}
\end{equation}
In order to converge to fitter states, quantum rotation gate is updated.\cite{wang2007} Rotating angle $\theta_{i}$ has a huge significance in performance of Quantum-inspired EAs, where $\theta_{i}$ can be defined as follows:
\begin{equation}
\theta_{i}=sign(\alpha_{i},\beta_{i})\Delta\theta_{i}
\end{equation} 

$sign(\alpha_{i},\beta_{i})$ represents sign of rotating angle $\theta_{i}$ to determine the direction. The value of $sign(\alpha_{i},\beta_{i})$ and $\Delta\theta_{i}$ are decided by looking up into a table in order to compare performance between the solution provided by current individual and the best solution obtained so far. \cite{wang-qa} \cite{wang2007} 
\subsection{Lack of universality of search process of QEA}
The main disadvantage appearing during the process of quantum state rotation lies in the dependency of using a lookup table for fixing the quantum rotation angle. Fixed rotating angle causes negative impact on search speed for an adaptive network, hence can limit the universality of search process by a significant extent. Enhancement of local searching ability and finding escape from local optima might be of great challenge in an ineffective rotating angle updation strategy. Moreover, QEA is not a quantum algorithm, rather it is an evolutionary algorithm inspired by quantum. In the next three sections we have proposed, illustrated and analyzed a novel quantum algorithm for ant colony optimization which is solely based on iteration driven path selection and convergence to the path having maximum pheromone.
\section{MNDAS Algorithm for Quantum Ant Colony Optimization}
Our proposed Ant Colony Optimization algorithm $MNDAS$ is the quantum version of the very basic ant colony system, where the problem space comprises of several  number of parallel paths existing between food source and Ant colony. We assume the number of ants starting from food source to ant colony in a given period of time is evenly distributed with respect to time.  Procedure begins with quantum state preparation to encode all possible paths encountered by ants while traversing from food source to colony. The encoded paths will undergo uniform superposition in order to be initially selected by the quantum ants with equal probability. Additionally, $MNDAS$ algorithm presumes that there is no pheromone already deposited in any path prior to execution.
\begin{figure*}
  \includegraphics[width=\linewidth , height=20cm]{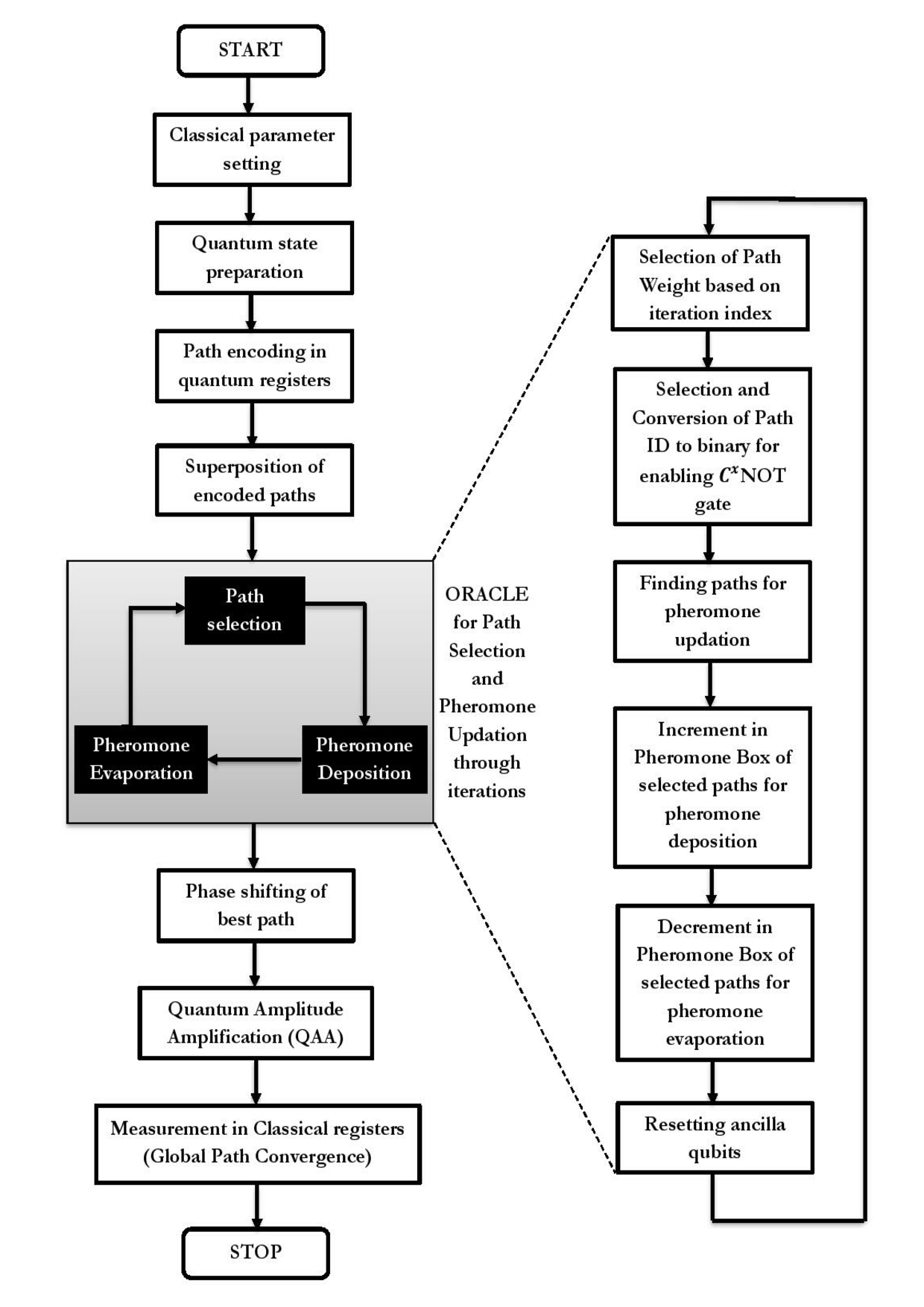}
  \caption{Execution flow of $MNDAS$ algorithm}
  \label{fig:flow}
\end{figure*}

An ORACLE function has been introduced for selecting paths and updating pheromone through multiple iterations. Generally, pheromone deposition and pheromone evaporation will take place for selected and unselected paths respectively in each iteration. Pheromone updation is restricted for a path where convergence criterion has already been met. The ‘best-path’ obtained through a sufficient number of iterations will automatically contain maximum amount of deposited pheromone. Once the solution is reached, our procedure will undergo a phase shift and an amplitude amplification to identify the path from the initial superposition of paths which further will be measured in classical registers. Figure \ref{fig:flow} shows a detailed control flow of the execution of $MNDAS$ algorithm.
Procedure $MNDAS()$ begins with quantum and classical register initialization for ant colony optimization problem. It encodes all the paths and pheromone in quantum registers through $init\_Ant()$ procedure. {$ Q[p_0], Q[p_1], ..., Q[p_{x-1}] $} are the $x$ qubits for all encoded paths to colony. A pheromone box with $d$ number of qubits, {$ Q[ph_0], Q[ph_1], ..., Q[ph_{d-1}] $} is introduced in $init\_Ant()$ to keep track of pheromone density distribution of each path during pheromone updation. Three additional qubits have been initialized with two ancillas $Q[a_1]$ and $Q[a_2]$ representing temporary qureg contents and $Q[a_{target}]$ for target qubit. $init\_Ant() $ also sets number of iteration for convergence ($K$), total number of paths ($n$) and all the path weights from food source to colony ($W$). 
\begin{algorithm}
\SetAlgoLined
\KwResult{Initialisation and parameter setting}
 $\triangleright$ Setting classical parameters \\
 $K \leftarrow$ Constant to denote number of iterations\;
 $n \leftarrow$ Number of paths\;
 $W[n] \leftarrow$ Path weights\;
 $x \leftarrow \lceil \log_2n \rceil$ Number of qubits for path encoding\;
 $Q[p_0 .. p_{x-1}] \leftarrow $ Respective qubits for path encoding\;
 $Q[a_1],Q[a_2],Q[a_{target}] \leftarrow$ Ancilla qubits\;
 $d \leftarrow$ Number of qubits to encode pheromone distribution\;
 $Q[ph_0 .. ph_{d-1}] \leftarrow $ Respective qubits to encode pheromone distribution\;
 $C[0..x-1] \leftarrow$ Number of classical registers for measurement\;
$\triangleright$ Initializing qubits \\
Set $qureg$ $Q[p_0..p_{x-1},a_1,a_2,ph_0..ph_{d-1},a_{target]}$ as $\ket{0}$\;
$\triangleright$ Quantum superposition of $x$ number of encoded paths' qubits. \\
$H(Q[p_0..p_{x-1}])$\;
 \caption{$init\_Ant()$}
 \label{initant}
\end{algorithm}
A total of $(x+d+3)$ entangled qubits are initialized with $\ket{0}$ of which first $x$ qubits undergo quantum superposition with the help of $x$ single qubit Hadamard gates.
\begin{algorithm}
\SetAlgoLined
\KwResult{Possible path exploration by Quantum Ants}
\While{$i$ in $(0, n-1)$}{
  $\triangleright$ Selecting paths corresponding to the iteration \\
  \If{$t \% W[i] == 0$}{
   \While{$l$ in $(p_0, p_{x-1})$}
   {
   \If{$i \% 2 == 0$}{
   $NOT(Q[l])$\;
   }
   }
   $\triangleright$ Multi Controlled Toffoli implementation to raise ancilla for selected paths  \\
   $C^{x}NOT(Q[p_0..p_{x-1}], Q[a_1])$\;
   $C^{x}NOT(Q[p_0..p_{x-1}], Q[a_2])$\;
   $\triangleright$ Reversible operation to achieve initial path encoding \\
   \While{$l$ in $(p_0, p_{x-1})$}
   {
   \If{$i \% 2 == 0$}{
   $NOT(Q[l])$\;
   }
   }
   }
  }\
  $NOT(Q[a_2])$\;
  $update\_Pheromone()$\;
  $\triangleright$ Resetting ancilla after pheromone box updation for each path \\
  $RESET(Q[a_1])$\;
  $RESET(Q[a_2])$\;
 \caption{$ant\_Execute(t)$}
 \label{antexecute}
\end{algorithm}
The initialization step along with problem encoding is followed by an iterative $ORACLE$. The $ORACLE$ consists of procedure $ant\_Execute()$, which performs path selection by picking up the indices of currently explored paths with the help of $MCT$ gates and $update\_Pheromone()$ to update the pheromone box. In order to encode a total of $16$ paths with $x=4$ and $d=4$ to implement corresponding $ant\_Execute()$ for path index $i \neq 11 ... 1$, $NOT$ gates ($X$) are used in respective $0$ positions to enable $MCT\ (C^xNOT)$ gate as shown in figure \ref{fig:antCircuit}. Implementation phase of $ant\_Execute()$ necessitates the decomposition of $MCT$ gates into $CNOT$ and $TOFFOLI\ (C^2NOT)$ as shown in figure \ref{fig:tofolli4}. In each iteration performed in procedure $ant\_Execute()$, ancilla qubits are raised for all the selected paths to be identified during $update\_Pheromone()$. Uncompute task has been performed on path encoding qubits to get back their initial setting and moreover, the ancilla qubits are reset at the end of each iteration of algorithm \ref{antexecute}. \textit{"iteration of ant\_Execute(t)"} section of the circuit shown in figure \ref{fig:antCircuit} is an instance of a single iteration. The circuit will be expanded in that section only along with the corresponding iteration steps. Then, selection of best path by the system, quantum amplitude amplification and measurement at the classical register are done at the end of the whole circuit as shown in figure \ref{fig:antCircuit}.
\begin{algorithm}
\SetAlgoLined
\KwResult{Updating pheromone density based on selected paths}
 $\triangleright$ Avoiding pheromone deposition for the path which has pheromone density as 111 ... 11. i.e. pheromone box for that path is full.  \\
 $C^{d}NOT(Q[ph_0..ph_{d-1}], Q[a_1])$\;
 $pheromone\_Deposition()$\;
 $C^{d}NOT(Q[ph_0..ph_{d-1}], Q[a_2])$\;
 $NOT(Q[ph_0..ph_{d-1})]$\;
  $\triangleright$ Avoiding pheromone evaporation for the paths which has pheromone density as 000 ... 00. i.e. pheromone box for those paths are empty.  \\
 $C^{d}NOT(Q[ph_0..ph_{d-1}], Q[a_2])$\;
 $NOT(Q[ph_0..ph_{d-1}])$\;
 $pheromone\_Evaporation()$\;
\caption{$update\_Pheromone()$}
\label{updatepheromone}
\end{algorithm}
\begin{figure*}
  \includegraphics[width=\linewidth]{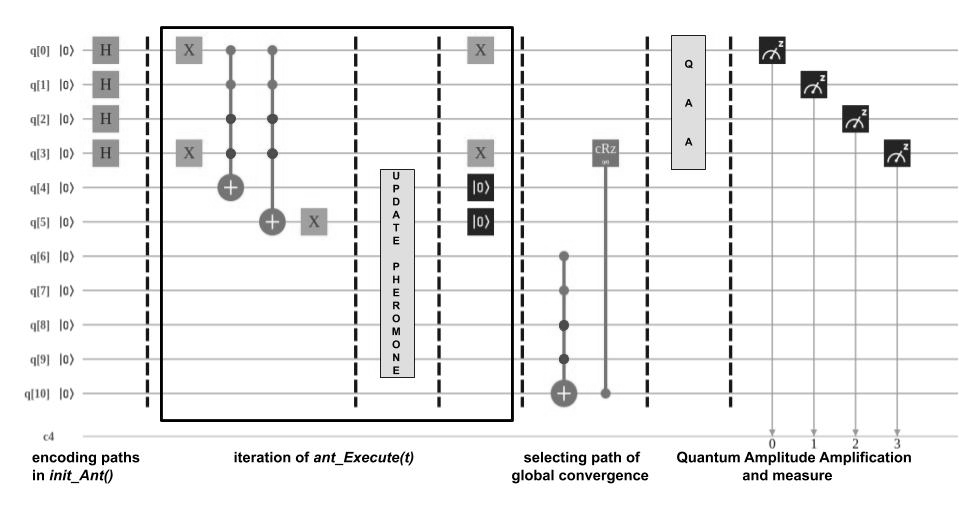}
  \caption{Quantum circuit synthesis by MNDAS algorithm for 6 as a selected path in an anonymous iteration}
  \label{fig:antCircuit}
\end{figure*}
\begin{figure}
  \includegraphics[width=8cm , height=3.5cm]{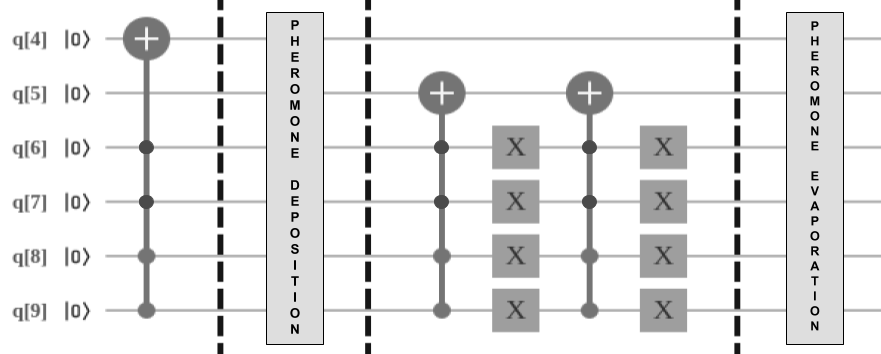}
  \caption{Quantum circuit for $update\_Pheromone()$}
  \label{fig:updatePheromone}
\end{figure}

\begin{figure}
  \includegraphics[width=8.5cm , height=12cm]{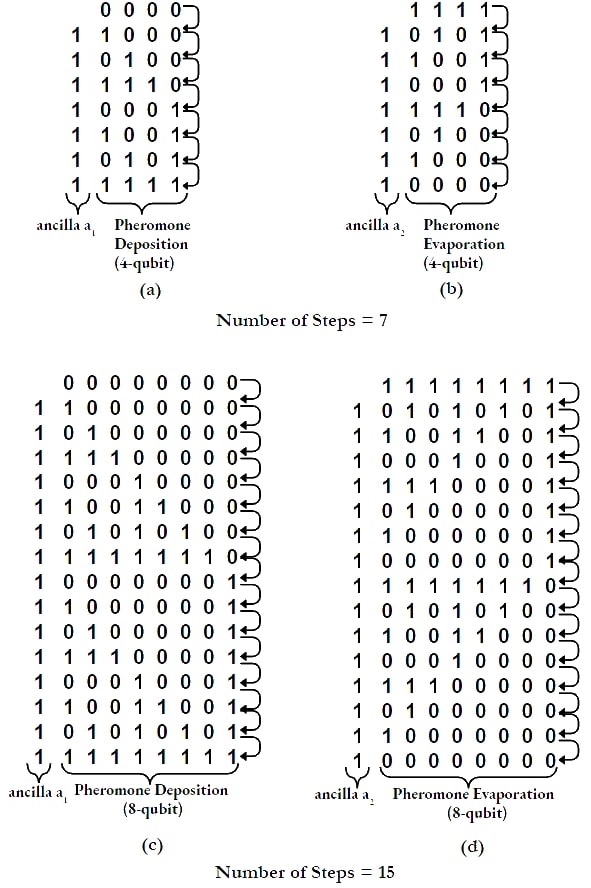}
  \caption{Example of a $4$ qubit and $8$ qubit pheromone box with (a), (c) representing pheromone deposition on selected paths and (b), (d) representing pheromone evaporation from unselected paths.}
  \label{fig:pheromonebox}
\end{figure}
\begin{algorithm}
\SetAlgoLined
\KwResult{Increase in pheromone density of selected path by unit}
 set $r \leftarrow ph_d-2$\;
 \While{$m$ in $(r, ph_0)$}
   {
   $C^{2}NOT(Q[a_1], Q[m], Q[m+1])$\;
   }
 $CNOT(Q[a_1], Q[m])$\;
\caption{$pheromone\_Deposition()$}
\label{pheromonedeposition}
\end{algorithm}
\begin{algorithm}
\SetAlgoLined
\KwResult{Decrease in pheromone density for the paths by unit excluding 
$i)\ Selected\ path$, $ii)\ Paths\ having\ pheromone\ density\ of\ 11...1$ and $iii)\ Paths\ having\ pheromone\ density\ of\ 00...0$}
 set $r \leftarrow ph_d-2$\;
 $CNOT(Q[a_2], Q[ph_0])$\;
 \While{$m$ in $(ph_0, r)$}
   {
   $C^{2}NOT(Q[a_2],Q[m], Q[m+1])$\;
   }
\caption{$pheromone\_Evaporation()$}
\label{pheromoneevaporation}
\end{algorithm}
\begin{algorithm}
\SetAlgoLined
\KwResult{Global path convergence of Ant Colony}
 $\triangleright$ Qubits initialization \\ 
 $init\_Ant()$\;
 \While{$t$ in $(1,K)$}{
  $\triangleright$ Iteration for path selection and pheromone updation \\
  $ant\_Execute(t)$\;
  }
$\triangleright$ Selecting best path \\ 
$C^{d}NOT(Q[ph_0..ph_{d-1}], Q[a_{target}])$\;
$\triangleright$ Phase shifting of target for best obtained path \\ 
$CPHASE(Q[a_{target}], Q[P_{x-1}], \pi)$\;
$\triangleright$ Quantum Amplitude Amplification \\ 
$QAA(Q[p_0..p_{x-1}])$\;
$\triangleright$ Measuring global path convergence for ant colony  \\ 
measure $Q[p_0..p_{x-1}]$ to $C[0..x-1]$\;
 \caption{$MNDAS()$}
 \label{mndas}
\end{algorithm}
Pheromone distribution among all paths is illustrated in the procedure $update\_Pheromone()$. Figure \ref{fig:updatePheromone} shows quantum ants perform $pheromone\_Deposition()$ on the selected paths chosen earlier by procedure $ant\_Execute()$ and $pheromone\_Evaporation()$ for the unselected paths. The shorter paths are supposed to converge earlier in comparison to comparatively longer paths through iterations leading to global path convergence. Number of qubits chosen for encoding the pheromone box $d$ causes variance in convergence time of the ant colony optimization problem. For a $4$ qubit and a $8$ qubit pheromone box, the step-by-step global path convergence is described in figure \ref{fig:pheromonebox} as example. Number of qubits chosen to encode pheromone box affects the overall performance of ACO. Often premature convergence causes lack of adaptiveness into the system in case of link failures or path barriers. The real ant colony behavior through natural synergy and group intelligence allows the ants to choose the second optimal path in case of any obstruction or unreachability in the best path available so far. Our MNDAS QACO algorithm exhibits resemblance to real ant behavior with the help of qubit expansion technique associated with pheromone box updation. If number of qubits in pheromone box is increased, it will undergo more number of iterations and thus will provide a tool for congestion-controlled traffic through controllable duration before global convergence. $pheromone\_Deposition()$ and $pheromone\_Evaporation()$ perform updation in pheromone box by maintaining the push and pop sequence from an unique updation order. For a $4$ qubit pheromone box, the qureg content follows the unique pheromone updation order of $0-8-4-14-1-9-5-15$ as shown in figure \ref{fig:pheromonebox} (a), where $0$ with binary equivalent $0000$ and $15$ with binary equivalent $1111$ represent initial empty pheromone box and box with maximum pheromone respectively. If $d$ represents the total number of qubits to encode pheromone distribution, maximum ($2^{\lfloor \log_2d \rfloor + 1}-1$) number of pheromone distribution states are available in our proposed QACO algorithm before 'best-path' convergence as shown in figure \ref{fig:pboxqubit}.
\begin{theorem}
Second best shortest path will never converge as optimal solution in $MNDAS()$, if there already exists best path with minimum path weight. 
\end{theorem}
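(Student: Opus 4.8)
\section*{Proof proposal}

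The plan is to translate the phrase ``converge as optimal solution'' into the concrete event that a path's pheromone register reaches the saturated all-ones state $\ket{1\cdots1}$: this is exactly the state detected by the marking step $C^{d}NOT(Q[ph_0..ph_{d-1}],Q[a_{target}])$ of $MNDAS()$, so only a path whose box is full is phase-shifted, amplified by $QAA$, and reported on measurement. It therefore suffices to show that, under the hypothesis that a unique best path of strictly minimum weight $W_b$ exists, the second-best path (weight $W_s>W_b$) never reaches saturation, whereas the best path does and does so first. Since the $x$ path qubits are held in uniform superposition and the pheromone box is entangled with them, each computational-basis component carries a definite pheromone value and evolves independently; I would run the whole argument componentwise on the basis state indexing a fixed path, reducing the problem to a purely classical analysis of one counter per path.

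First I would pin down the deposition schedule. Inspecting $ant\_Execute(t)$, path $i$ raises the ancilla (hence is acted on by $pheromone\_Deposition()$) exactly when $t\equiv 0\pmod{W_i}$, and is acted on by $pheromone\_Evaporation()$ on every other iteration. Because deposition advances the box by one step in the fixed updation order $(0,8,4,14,\ldots,15)$ while evaporation is its exact inverse, the box behaves as a bounded counter on levels $0,1,\ldots,L$ that increments on a deposit, decrements on an evaporation, reflects at $0$ (evaporation skipped on the empty box), and saturates at $L$ (deposition skipped on the full box), with $L\ge 2$ for any $d\ge 2$. Thus in any window of $W_i$ consecutive iterations the box sees one deposit and $W_i-1$ evaporations, a net change of $2-W_i$ before clamping, and $\lfloor t/W_i\rfloor$ deposits occur in the first $t$ iterations.

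Next I would exploit the strict inequality $W_s>W_b$, so $W_s\ge 2$. I would prove by induction on the period index $k$ that at the start of each block of $W_s$ iterations the second-best box is empty: the $W_s-1\ge 1$ evaporations preceding each deposit pin it at $0$ through the reflecting boundary, the lone deposit lifts it to level $1$, and the first evaporation of the next block returns it to $0$. Hence the second-best level never exceeds $1<L$ and its box is never saturated. For the best path two regimes arise. If $W_b=1$ it is deposited on every iteration, never evaporates, and its level is simply $\min(t,L)$, reaching and then remaining at $L$ from $t=L$ onward; it is therefore the unique saturated path and the one flagged by the marking step. If $W_b\ge 2$, the same block argument shows no path saturates and the statement holds vacuously. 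Either way the second-best never saturates while the best does whenever anyone does, which is the assertion.

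I expect the reflecting boundary at $0$ to be the main obstacle. Prefix-sum domination of the step sequences does \emph{not} transfer directly to the clamped levels, since the two paths deposit on different iterations and reflection at $0$ can erase an accumulated lead, so a naive termwise comparison fails. The clean route is the block argument above: showing the second-best box is genuinely \emph{reset} to $0$ at every period boundary, not merely bounded, which rules out any slow upward drift toward saturation. Secondary points to verify are the componentwise reduction from the entangled superposition to independent classical counters, and the one-deposit-per-period accounting that forbids ties; both follow from the strictness $W_s>W_b$ together with the controlled structure of $pheromone\_Deposition()$ and $pheromone\_Evaporation()$.
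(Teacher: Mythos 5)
Your translation of ``converges as optimal solution'' into saturation of the pheromone register (the all-ones state that enables the $C^{d}NOT$ marking, the $\pi$ phase shift, and QAA), and your reading of the deposition schedule ($t\equiv 0 \pmod{W_i}$), both agree with the paper. The genuine gap is in your evaporation schedule. You assume an unselected, non-empty, non-full path is evaporated on \emph{every} iteration, and your block-reset lemma then forces every path of weight $\ge 2$ to oscillate between levels $0$ and $1$ forever. That makes the lemma hold only ``vacuously'' whenever $W_b\ge 2$ --- but in that regime nothing is ever marked, QAA amplifies nothing, and the final measurement is uniform over all paths, so the second-best path is output with probability $1/n$; under the natural reading of ``converge as optimal solution'' the statement would then be closer to false than vacuously true. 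More decisively, your model contradicts the paper's own reported behaviour: in the $8$-path instance (Table 1) the best path has weight $2$, and Figure 9 shows its pheromone box reaching $1111$ at iteration $K=21$; the $16$-path instance converges to path $13$ of weight $6$. So the dynamics you analyzed cannot be the dynamics the lemma is about. The reported trace is instead consistent with evaporation being applied only at iterations in which \emph{some} path is selected (the implementation skips iterations where no weight divides $t$); under that schedule your key lemma breaks --- the second-best box is \emph{not} reset to $0$ at every period boundary, and the best path genuinely drifts to saturation --- which is exactly the non-vacuous situation the lemma intends.

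The paper's proof takes a different route entirely: it is a comparative counting argument, not a per-path absolute one. Taking $K'$ to be a common multiple of $W_{min1}$ and $W_{min2}$, it writes $K'=r_1W_{min1}=r_2W_{min2}$, deduces $r_1>r_2$ from $W_{min1}<W_{min2}$, and hence $d_1>d_2$ together with $e_1<e_2$ (more depositions and fewer evaporations for the best path), concluding that the best path fills its box first and the second-best is never the marked path. That argument never needs (and never claims) that the second-best path is pinned at level $\le 1$; it only needs a strict frequency advantage, and it remains meaningful under the event-driven dynamics that match the simulations. To repair your proposal you would need to (i) adopt the evaporation schedule consistent with the paper's trace, and (ii) replace the reset argument by a drift/comparison argument showing the minimum-weight path's deposits outpace its evaporations by the box depth $L$ while the second-best path's do not --- at which point you would essentially have reconstructed, and made rigorous, the paper's counting proof with the clamped-counter boundary effects it glosses over.
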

\begin{proof}
Each iteration of procedure $ant\_Execute()$ performs path selection by applying modulo division arithmetic of a specific iteration index ($K$) by each of the path weights. If that iteration index is a multiple of path weight, then the path ID of that corresponding path weight is selected for pheromone deposition. Say, the path with minimum weight is $p_{min1}$ and the second best path is $p_{min2}$, where $W_{min1} < W_{min2}$ ($W_{min1}$, $W_{min2}$ are path weights corresponding to paths $p_{min1}$ and $p_{min2}$ respectively). After a sufficient number of iterations ($K_{threshold}$), say, number of times pheromone has been deposited on paths $p_{min1}$ and $p_{min2}$ are $d_1$ and $d_2$ respectively. On the other hand, number of times pheromone has been evaporated from paths $p_{min1}$ and $p_{min2}$ are $e_1$ and $e_2$ respectively. Each deposition indicates selection and each evaporation indicates non-selection of the corresponding path. Now if $r_1$ and $r_2$ denote the number of times paths $p_{min1}$ and $p_{min2}$ will be selected in $ant\_Execute()$ procedure and $K'$ denotes any random iteration index ($K' >> K_{threshold}$), which is multiple of path weights of both $p_{min1}$ and $p_{min2}$,
\begin{equation}
K' = r_1 . W_{min1}
\label{leq1}
\end{equation}
\begin{equation}
K' = r_2 . W_{min2}
\label{leq2}
\end{equation}
From equations \ref{leq1} and \ref{leq2}, we get
\begin{equation}
\begin{aligned}
(W_{min1} < W_{min2}) \implies (r_1 > r_2) \\ \implies ((d_1 > d_2)\wedge (e_1 < e_2))
\end{aligned}
\label{leq3}
\end{equation}
Since, the shortest path with minimum path weight will always undergo selection more than any other path present in the graph (including the second best path), our algorithm guarantees convergence of the shortest path as the optimal solution.
\end{proof}
\begin{theorem}
If the shortest path is removed from the search space before convergence, then our solution converges to the next best path after adequate iterations.
\end{theorem}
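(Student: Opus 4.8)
The plan is to reduce this statement to Lemma~1 (the preceding result) re-applied to the \emph{reduced} search space obtained after deletion of the shortest path. The central observation I would exploit is that the path-selection rule inside $ant\_Execute()$ is \emph{decoupled} across paths: a path $p_j$ is chosen for deposition at iteration $t$ precisely when $t \bmod W_j = 0$, a condition depending only on $W_j$ and on $t$, never on the presence or pheromone level of any other path. Consequently, removing $p_{min1}$ from the graph leaves the selection pattern of every surviving path unchanged; only the competition for the maximally-deposited path is altered.

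First I would record the state of the system at the removal instant. By Lemma~1, while $p_{min1}$ was present the ordering $W_{min1} < W_{min2} < W_{min3} < \cdots$ forced $p_{min1}$ to accumulate strictly the most pheromone and $p_{min2}$ to hold the second-highest net level $d_2 - e_2$. Since removal is assumed to occur \emph{before convergence}, no path has yet filled its box to the saturating state $1\ldots 1$; in particular $p_{min2}$ still has head-room in its $d$-qubit pheromone box. Next I would run the Lemma~1 counting verbatim on the surviving set $\{p_j : j \neq min1\}$, in which $p_{min2}$ now carries the minimum weight. For any large common multiple $K'' \gg K_{threshold}$ of the surviving weights we again obtain $K'' = r_2' \, W_{min2} = r_3' \, W_{min3}$ with $W_{min2} < W_{min3}$, yielding $r_2' > r_3'$ and hence $(d_2' > d_3') \wedge (e_2' < e_3')$ for the post-removal counts. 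Thus after deletion $p_{min2}$ is selected strictly more often than any remaining path, so its net pheromone grows fastest; combined with its already-leading head start from the first phase, $p_{min2}$ reaches the saturated box state first and becomes the unique path phase-shifted and amplitude-amplified at the close of $MNDAS()$.

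The hard part will be the bookkeeping of the \emph{residual} pheromone state carried across the removal event together with the boundary behaviour of the box, because Lemma~1 is stated for counts evaluated at large common multiples $K'$, whereas removal may fall at an arbitrary iteration where the instantaneous deposition and evaporation tallies need not yet respect the asymptotic ordering. I would need to verify that the partial deposits and evaporations inherited from phase one neither let a competitor $p_{min3}$ momentarily overtake $p_{min2}$ nor stall $p_{min2}$ at the floor or ceiling of its register. The \emph{before convergence} hypothesis is exactly what rules out the ceiling obstruction (no box is yet full), and the second-place status granted by Lemma~1 controls the floor obstruction and any transient overtaking; making these two guarantees precise, rather than the modulo counting itself, is where the genuine effort lies.
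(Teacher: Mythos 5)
Your proposal is correct and its core is the same as the paper's: after the best path disappears, the modulo-counting argument of Lemma~1 is re-run on the surviving paths, among which $p_{min2}$ now has minimum weight, hence is selected most often and converges. The difference is in how removal is formalized and in what is tracked afterwards. The paper models un-reachability not as deletion but as assigning the removed path an \emph{infinite (or very large) path weight}: the path stays in the register, the condition $t \bmod W = 0$ is never again satisfied for it, so it is never selected, undergoes only evaporation, and its pheromone-box entry decays to all $0$'s --- this is the paper's explicit mechanism ensuring the stale leader cannot interfere with convergence detection. You instead delete the path outright and lean on the decoupling of the selection rule, which buys you a cleaner reduction to Lemma~1 but leaves the fate of the removed path's residual (and, by Lemma~1, maximal) pheromone entry implicit; your \emph{before convergence} hypothesis does cover it (a non-full, never-again-deposited entry can never reach $1\ldots1$), but you should say so, since in a quantum register that entry does not vanish with the path. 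Conversely, you are more careful than the paper on the survivors' side: the paper's proof consists of the single quantified implication $(W_m < W_i) \implies (r_m > r_i)\wedge(\Delta_m < \Delta_i)$ and never addresses the transient overtaking, floor, or ceiling issues you flag as the genuine effort; your proposal is thus an honest strengthening of an argument the paper leaves at the level of an assertion.
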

\begin{proof}
Un-reachability in the best path can be mathematically mapped as the path with infinite path weight (or a path weight with very large magnitude). Since, our algorithm performs pheromone box updation dynamically, it will adapt to the new changes which have taken place in the system. Due to being unselected in all the iterations taken place after the previous shortest path has attained an infinite path weight, the said path will eventually undergo several evaporations and being at all $0's$ corresponding to its entry in the pheromone box. On the other hand, the second optimal path will have the chance to be selected maximum number of times during procedure $ant\_Execute()$.
\\$n$ = total number of paths present from food source to ant colony.
\\$P$ = set of all paths.
\\$p_m$ = second optimal path (which currently is the best path with the minimum weight due to absence of the shortest path).
\\$p_i$ = any arbitrary path other than $p_m$.
\\$W_m$, $W_i$ = path weights corresponding to $p_m$ and $p_i$ respectively.
\\$r_m$, $r_i$ = number of times paths $p_m$ and $p_i$ will be selected respectively.
\\$\Delta_m$, $\Delta_i$ = time taken for convergence in pheromone box by $p_m$ and $p_i$ respectively.
\begin{equation}
\begin{aligned}
\exists p_m \in P \forall (p_i \in P)\wedge(0 \le i \le n-1) ((W_m < W_i) \\ \implies (r_m > r_i)\wedge(\Delta_m < \Delta_i))
\end{aligned}
\end{equation}
\end{proof}
Whenever $update\_Pheromone()$ procedure is invoked, it typically puts a constraint on amount of pheromone deposition based on the pheromone density in pheromone box of the corresponding path. Paths with pheromone density $11...1$ is supposed to contain maximum amount of pheromone. In such cases, $pheromone\_Deposition()$ procedure restricts itself to increase the amount of deposited pheromone by changing the ancilla qubit $Q[a_1]$ as further deposition of pheromone will cause disturbance in convergence by resetting the pheromone density to $00...0$. On the contrary, procedure $pheromone\_Evaporation()$ does work only for the unselected paths of a particular iteration. The paths with pheromone density $00...0$ do not even contain any pheromone. $pheromone\_Evaporation()$ has to check an underflow condition by putting a barrier into pheromone to be evaporated from a path without any available pheromone deposited on it. There are other paths which might be unselected in $i^{th}$ iteration having pheromone density of $11...1$. In order to prevent pheromone evaporation from an already convergent path, $pheromone\_Evaporation()$ also does not decrease the pheromone density of such paths as well since it will affect global convergence in QACO.
\\
\begin{figure}
  \includegraphics[width=8cm, height=6cm]{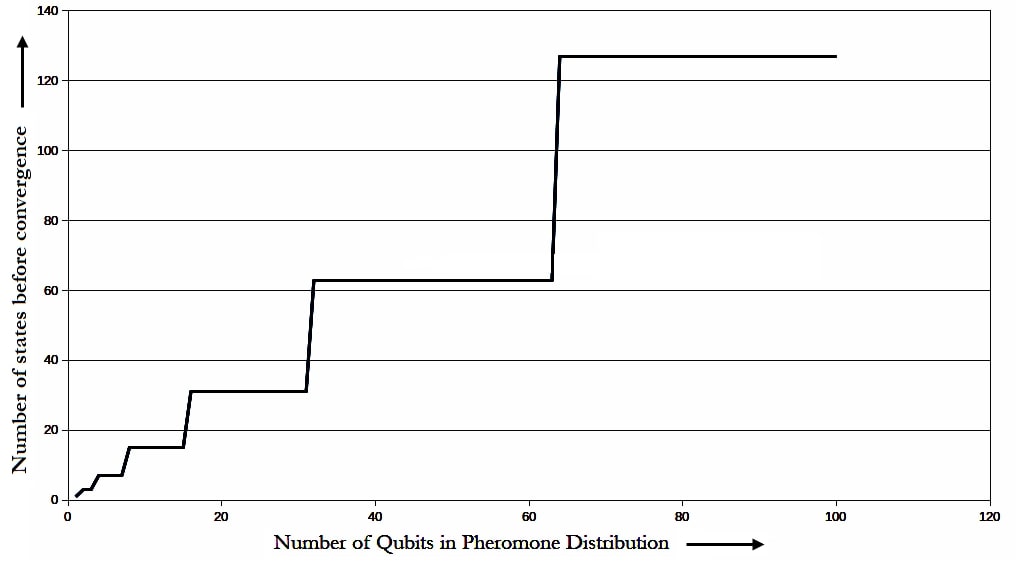}
  \caption{Number of maximum pheromone states in pheromone box with respect to number of qubits. }
  \label{fig:pboxqubit}
\end{figure}
After complete exploration of all possible paths by quantum ants, QACO will converge to optimal state with shortest path chosen as output of procedure $ant\_Execute()$. The index value of the shortest path will be selected by using a MCT gate with pheromone box {$Q[ph_0, ph_1, .... ph_{d-1}]$} as control and $Q[a_{target}]$ as target. The MNDAS algorithm uses a controlled phase shift ($CPHASE$) gate performing a $\pi$ rotation of the qubit in most significant position ($Q[p_{x-1}]$) based on value of $Q[a_{target}]$. Once phase is shifted for the best path, procedure $MNDAS()$ invokes Quantum Amplitude Amplification (QAA) technique in order to amplify the probability density value of the path with minimum weight (shortest path from food source to colony). \cite{grover} QAA technique is followed by a measurement step where convergence of a quantum superposed state is mapped into a classical register.
The concern associated with most of the combinatorial optimization problems is convergence. Stochastic search procedures like classical ant colony optimization face challenges in achieving optimality in solution obtained, as pheromone update often prevents an algorithm to reach optimal state. \cite{gutjhar} \cite{hoos} It is worth mentioning that our proposed MNDAS QACO algorithm is well suited in achieving convergence in value as well as convergence in solution. Our algorithm yields optimal solution atleast once and optimality is preserved in the same solution with course of time-variant iterations; thus ensuring convergence in value and convergence in solution both.
\section{Algorithm Analysis and Result}
In real ant colonies, ants aim to find the shortest path from a colony to food source. Since, ants deposit a certain amount of pheromone in its path from nest to food and while making the return trip, follow the same path marked previously along with depositing pheromone on the same, ants following the shorter path are expected to return earlier. The real key of our MNDAS algorithm follows the same principle  where the rate of deposition of pheromone has been made faster on the shorter path in comparison to the longer paths to induce pheromone evaporation effect.
\begin{figure}
  \includegraphics[width=8cm, height=5cm]{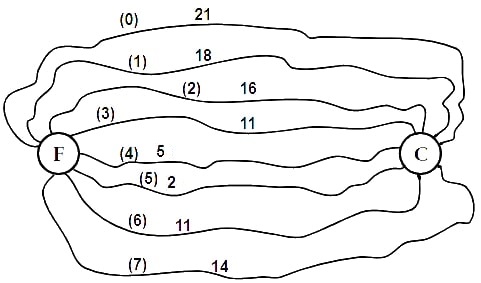}
  \caption{A example of Simple Ant colony optimization with 8 different paths, encoded in a 3 qubit quantum system}
  \label{fig:example}
\end{figure}
Pheromone evaporation takes place periodically by a certain amount at a constant rate which implies the existence of frequently visited paths only through pheromone deposition as rarely visited paths by ants will undergo accelerated evaporation followed by no existence due to lack of pheromone deposition. All ants starting their food searching journey can learn from the information left by previously visited ants and can get guidance to follow the shorter path directed by maximum pheromone deposit.
This foraging behavior of real ants can be mapped into shortest path finding problem where a number of artificial ants mimicking the data packets will build solutions and exchange relevant information on the quality of the solutions via a communication scheme which is expressed in our algorithm as $update\_Pheromone()$ consisting of $pheromone\_Deposition()$ and $pheromone\_Evaporation()$ quantumly. In order to elucidate our proposed $MNDAS$ algorithm, we have considered a graph example with $8$ possible paths existing between food source and ant colony as shown in figure \ref{fig:example}. 
\begin{figure*}
  \includegraphics[width=\linewidth , height=19cm]{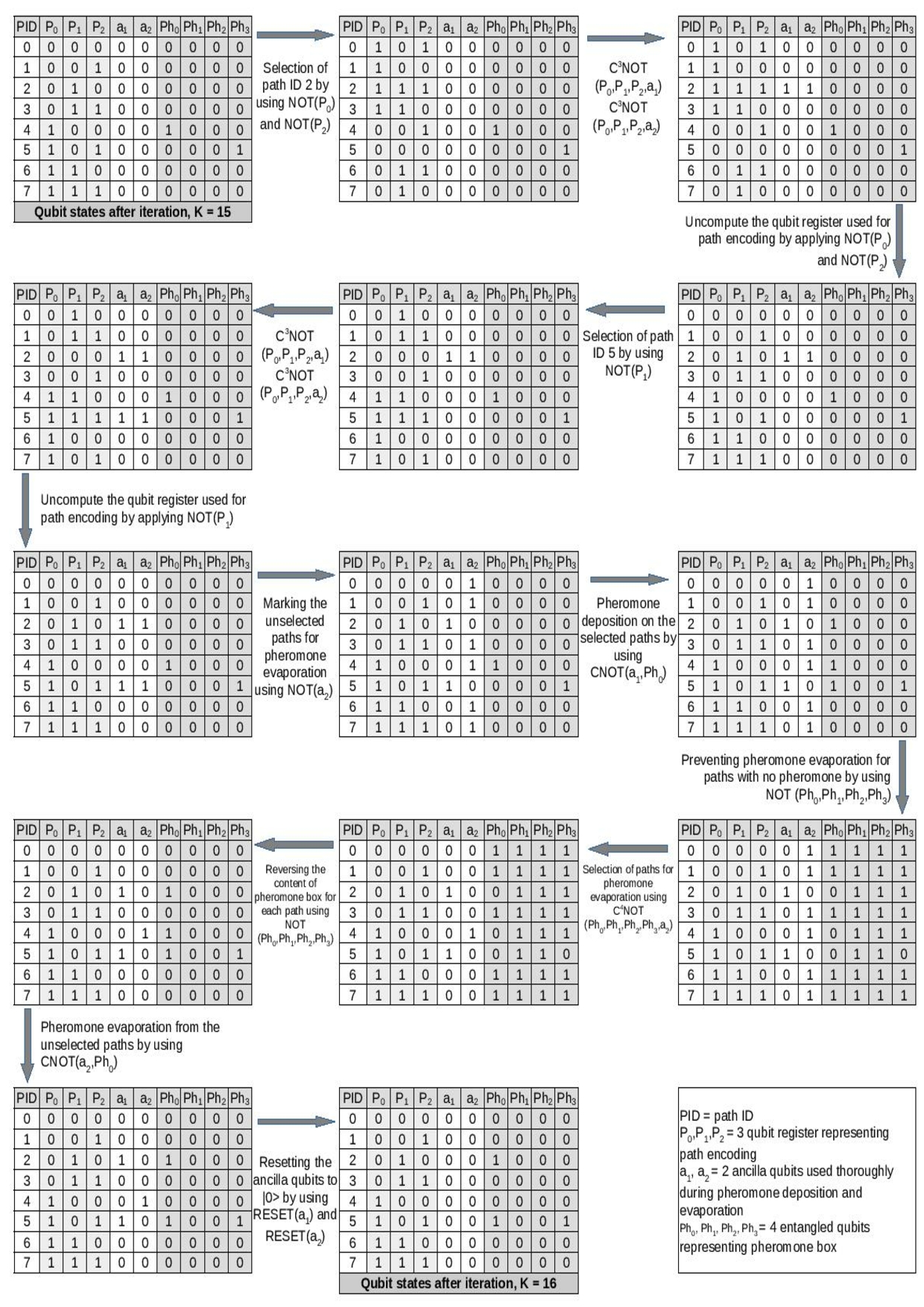}
  \caption{Change in qubit states of the pheromone box during a single iteration for $K = 16$. Qubit $a_{target}$ is not shown as it is unaffected during quantum iteration.}
  \label{fig:iteration}
\end{figure*}

\begin{figure}
  \includegraphics[width=9cm , height = 25cm]{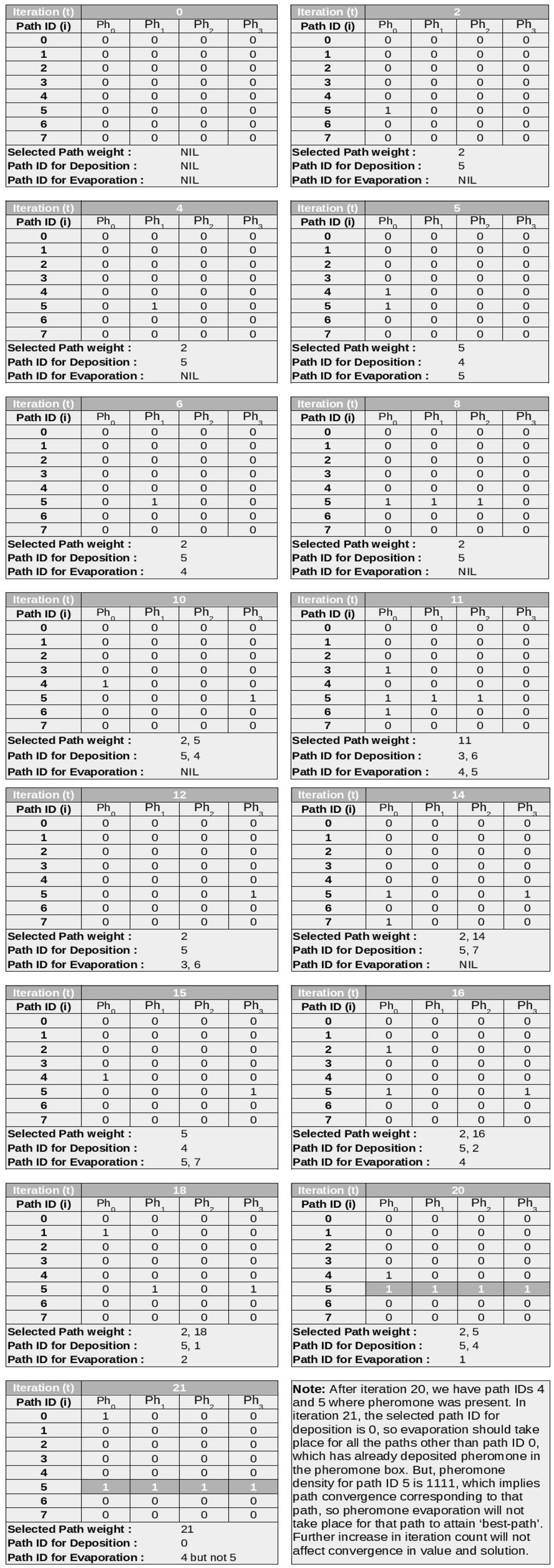}
  \caption{Qubit state status of pheromone box for the iteration $K = 0,2,4,5,6,8,10,11,12,14,15,16,18,20,21$}
  \label{fig:pheromonestates}
\end{figure}

\begin{table}
\begin{center}
\begin{tabular}{|c|c|c|c|c|c|c|c|c|}\hline
$path\ id$ & $0$ & $1$ & $2$ & $3$ & $4$ & $5$ & $6$ & $7$ \\
\hline
$Weight$ & $21$ & $18$ & $16$ & $11$ & $5$ & $2$ & $11$ & $14$\\
\hline	
\end{tabular}
\end{center}
\caption{Cost corresponding to paths in a 3 qubit QACO as shown in Figure \ref{fig:example}}
\label{p-w-3}
\end{table}
In $init\_Ant()$ procedure, all the classical and quantum parameter settings have been performed. For a path count of $n = 8$, the number of qubits representing path encoding is $x = 3$. If we use a $4$ qubit pheromone box ($d = 4$), the initial quantum superposed state $\ket{\psi}$ with ($x+d+3$) qubits can be denoted as equation \ref{eq:state}, where all the qubits except the path encoding qubits are initialized with $\ket{0}$.
\begin{equation}
\begin{aligned}
\ket{\psi} = 1/\sqrt{8}\ket{0000000000} + 1/\sqrt{8}\ket{0010000000} + \\	 1/\sqrt{8}\ket{0100000000} + 1/\sqrt{8}\ket{0110000000} + \\
1/\sqrt{8}\ket{1000000000} + 1/\sqrt{8}\ket{1010000000} + \\
1/\sqrt{8}\ket{1100000000} + 1/\sqrt{8}\ket{1110000000} \\
\end{aligned}
\label{eq:state}
\end{equation}
Equation \ref{eq:state} also ensures uniform distribution of probability through quantum superposition to achieve equiprobable selection chances of all possible paths by quantum ants.
The procedure $ant\_Execute()$ performs multiple iterations for path selection and pheromone updation to achieve final convergence to best-path. Each iteration of $ant\_Execute()$ will necessitate the pheromone box to undergo successive changes in qubit states. A detailed transition showing the changes in pheromone box content from the end of iteration $K=15$ to the end of iteration $K=16$ has been shown in figure \ref{fig:iteration}.
In each iteration, the path IDs selected for pheromone deposition and pheromone evaporation have been explicitly shown. For example as shown in figure \ref{fig:iteration} during iteration $K=16$, path IDs $2$ and $5$ are selected for deposition as iteration index is a multiple of their path weights $16$ and $2$ respectively. The path with IDs $0, 1, 3, 6$ and $7$ contain all $0$s in pheromone box implying their inapplicability for pheromone evaporation except the path with ID $4$ which is unselected in that iteration.
The path encoding qubits get back to their initial configuration due to uncompute operation specified in $ant\_Execute()$. Moreover, the two ancilla qubits $Q[a_1]$ and $Q[a_2]$ will also be resetted at the end of each iteration. $Q[a_{target}]$ qubit is not involved in the whole iterative procedure and hence, holds the initial value. So, in each iteration only the state of the qubits $Q[Ph_0, Ph_1, Ph_2, Ph_3]$ representing pheromone box will be updated. Figure \ref{fig:pheromonestates} represents the status of qubit states of pheromone box in multiple iterations for $K=0, 2, 4, 5, 6, 8, 10, 11, 12, 14, 15, 16, 18, 20, 21$. In iteration $K=21$, the qubit register for pheromone box contains all $1$s corresponding to path ID $5$. The path with ID $5$ denotes the shortest path corresponding to the minimum path weight $2$ as shown in figure \ref{fig:pheromonestates}.
The function $ant\_Execute()$ is the input to the ORACLE which yields the shortest path $5$ as output of the ORACLE. Controlled phase shift gate performs a $\pi$-rotation of path index of the selected shortest path enabling the amplitude  of the selected path to be phase shifted by $\pi$ to undergo Quantum Amplitude Amplification (QAA). QAA amplifies the probability amplitude of the selected path as shown in figure \ref{fig:exsol} by applying phase inversion followed by performing inversion about mean operation on target qubit $Q[a_{target}]$. $MNDAS$ QACO algorithm is free from earlier convergence to local optima and lack of universality of search space with the help of adaptive quantum algorithms, \ref{pheromonedeposition} and \ref{pheromoneevaporation}.
\begin{figure}
  \includegraphics[width=8cm, height=5cm]{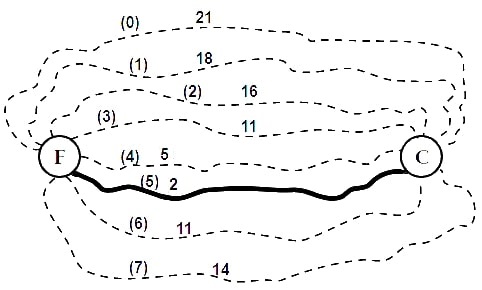}
  \caption{QAA amplifies the probability amplitude of the selected path.}
  \label{fig:exsol}
\end{figure}
\subsection{Results}
We have implemented $MNDAS$ QACO algorithm using IBM QISKIT. The algorithm has been executed on both $15$ qubit IBMQ ($ibmq\_16\_melbourne$) and QASM simulator for $\leq 8$ paths and higher number of paths respectively. Since, $RESET$ operation is currently not supported, we have implemented each step of iteration separately by initializing qubits with the output state of previous iteration. IBMQ error threshold values, $\zeta$ corresponding to a single qubit quantum gate $U_2$ and $CNOT$ gate are depicted as $4.904 e^{-4} \leq \zeta_{U_2} \leq 2.711 e^{-3}$ and $1.250 e^{-2} \leq \zeta_{CNOT} \leq 8.890 e^{-2}$ respectively with average qubit frequency of $4.976$ GHz approximately.
The Ant colony specified in figure \ref{fig:example} shows a simple network with $8$ paths with path ids $0,1,...\:,7$. Encoding $8$ such paths in our proposed $MNDAS()$ algorithm requires $3$ qubits. Among all possible path costs, minimum is $2$ which in turn, is associated with path ID $5$ as shown in figure \ref{fig:exsol}. The outcome of a particular execution of our algorithm on $3$ qubits and number of iterations as $K=200$ is shown in figure \ref{fig:resultq3} where, the path id $5$ undergoes amplitude amplification after being selected as shortest path with probabilistic amplitude $(p_{selected})$ of $0.501$ where $p_{(\forall x\: \in \:n\land\:x \: \neq\:selected)} << p_{selected}$.
\begin{figure}
  \includegraphics[width=6cm, height=4.8cm]{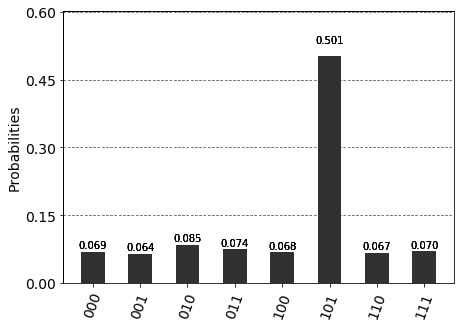}
  \caption{Convergence to the shortest path in a $8$-path ACO}
  \label{fig:resultq3}
\end{figure}
Another example is taken into consideration with $16$ possible paths and $4$ number of qubits to encode all possible paths with path weight $6$ as minimum cost (shortest path) and path id $13$ as shown in figure \ref{fig:resultq4} for the distribution of path weights as specified as table \ref{p-w-4}.
\begin{table}
\begin{center}
\begin{tabular}{|c|c|c|c|c|c|c|c|c|}\hline
$path\ id$ & $0$ & $1$ & $2$ & $3$ & $4$ & $5$ & $6$ & $7$ \\
\hline
$Weight$ & $12$ & $9$ & $24$ & $131$ & $17$ & $99$ & $11$ & $100$\\
\hline	
$path\ id$ & $8$ & $9$ & $10$ & $11$ & $12$ & $13$ & $14$ & $15$ \\
\hline
$Weight$ & $24$ & $31$ & $64$ & $79$ & $73$ & $6$ & $67$ & $101$ \\
\hline
\end{tabular}
\end{center}
\caption{Cost corresponding to paths in a 4 qubit QACO}
\label{p-w-4}
\end{table}
\begin{figure}
  \includegraphics[width=6cm, height=4.8cm]{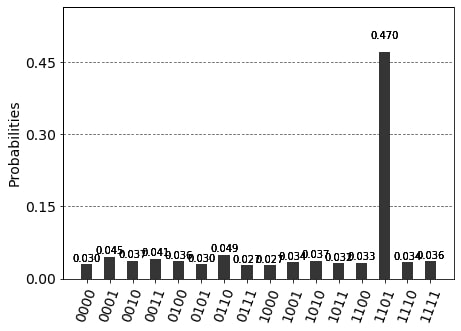}
  \caption{Convergence to the shortest path in a $16$-path ACO}
  \label{fig:resultq4}
\end{figure}
\subsection{Complexity Analysis:}
The expected optimization time of well-known MAX-MIN Ant System (MMAS) for single destination shortest path on a graph G with $m$ number of vertices is $O(m^3 + \frac{m}{\rho})$, where $\rho$ is evaporation factor. \cite{mmhoos} In our proposed QACO algorithm $MNDAS()$, a single quantum superposed state is prepared to encode all possible paths, thus exploration of all the possible $n$ paths can be incorporated in a single iteration in $O(1)$ time. Since procedure $MNDAS()$ has chosen $K$ number of iterations to be performed for global path convergence, the total complexity of $ant\_Execute()$ through $pheromone\_Updation()$ requires $O(K.n) $ time complexity. The QAA performed in $MNDAS$ algorithm incurs a complexity of $O(\sqrt{n})$. So, overall complexity of our proposed novel quantum ACO is $O(K.n+\sqrt{n}) \in O(n)$, as $K$ is constant. Thus, our quantum algorithm provides polynomial speedup over its classical counterpart.
\section{Conclusion}
In this paper, we have proposed a novel quantum algorithm for ant colony optimization to solve computationally hard combinatorial optimization problems. Our algorithm MNDAS QACO (Mrityunjay-Nivedita-Debdeep-Amlan-Subhansu Quantum Ant Colony Optimization) approaches a novel quantum algorithm to be run on a quantum hardware instead of quantum-inspired evolutionary ACO algorithms available so far. Our approach can be modelled as quantum module for a variety of NP-Hard problems namely Travelling Salesman Problem (TSP), Vehicle Routing Problem and Network Routing Problem.\\
MNDAS QACO is an adaptive quantum algorithm ensuring reliability in obtaining the shortest path through $pheromone\_Update()$ quantum module. We have also built up a fault prevention mechanism through structural constraints applied over pheromone deposition and pheromone evaporation to achieve unaffected global convergence of ACO problems. Our future work will incorporate quantum gate cost optimization through fault tolerant logic synthesis of quantum circuits to reduce the gate cost and improve the overall efficiency of our algorithm. In order to physically implement the multi-qubit quantum gates like MCT, Toffoli specified in our algorithms, a considerable number of SWAP gates have to be introduced which in turn, will incur a huge cost over head. This might be our extended research area to work in order to reduce circuit complexity and gate overhead through optimized synthesis of quantum physical circuits. Moreover, we will also focus on enhancing this quantum ACO algorithm for other complex variants of ant colony system.

%\section*{References}\label{sec13}
\bibliographystyle{plain}
\bibliography{sample-base}

\vfill\pagebreak

\section{Appendices}\label{sec14}

\end{document}